\DeclarePairedDelimiter\ceil{\lceil}{\rceil}
\DeclarePairedDelimiter\floor{\lfloor}{\rfloor}
 \newcommand{\no}{\nonumber}
\newcommand{\bA}{\mathbf{A}}
\newcommand{\bB}{\mathbf{B}}
\newcommand{\bH}{\mathbf{H}}
\newcommand{\bM}{\mathbf{M}}
\newcommand{\bg}{\mathbf{g}}
\newcommand{\bx}{\mathbf{x}}
\newcommand{\by}{\mathbf{y}}
\newcommand{\bz}{\boldsymbol{0}}
\newcommand{\bv}{\boldsymbol{v}}
\newcommand{\bw}{\boldsymbol{w}}
\newcommand{\bu}{\boldsymbol{u}}
\newcommand{\cS}{\mathcal{S}}
\newcommand{\cT}{\mathcal{T}}
\newcommand{\bI}{\mathbf{I}}
\newtheorem{thm}{Theorem}
\newtheorem{lem}{Lemma}
\newtheorem{define}{Definition}
\DeclarePairedDelimiter\dpro{\langle}{\rangle}
\def\ps@headings{%
\def\@oddhead{\mbox{}\scriptsize\rightmark \hfil \thepage}%
\def\@evenhead{\scriptsize\thepage \hfil \leftmark\mbox{}}%
\def\@oddfoot{}%
\def\@evenfoot{}}
\begin{document}
\title{Precise Phase Transition of Total Variation Minimization}
\author{\IEEEauthorblockN{Bingwen Zhang}
\IEEEauthorblockA{Dept. of ECE\\
Worcester Polytechnic Institute\\
Worcester, MA 01609\\
Email: bzhang@wpi.edu}
\and
\IEEEauthorblockN{Weiyu Xu}
\IEEEauthorblockA{Dept. of ECE\\
University of Iowa\\
Iowa City, IA 52242\\
Email:weiyu-xu@uiowa.edu}
\and
\IEEEauthorblockN{Jian-Feng Cai}
\IEEEauthorblockA{Dept. of Mathematics\\
Hong Kong U. of Sci. \& Tech.\\
Clear Water Bay, Hong Kong\\
Email:jfcai@ust.hk}
\and
\IEEEauthorblockN{Lifeng Lai}
\IEEEauthorblockA{Dept. of ECE\\
Worcester Polytechnic Institute\\
Worcester, MA 01609\\
Email: llai@wpi.edu}}

%\author{Bingwen Zhang, Weiyu Xu and Lifeng Lai \thanks{Bingwen Zhang and Lifeng Lai are with Department of Electrical and Computer Engineering,
%Worcester Polytechnic Institute, Worcester, MA, 01609. Email:\{bzhang,llai\}@wpi.edu.\\ Weiyu Xu is with Department of Electrical and Computer Engineering, University of Iowa, Iowa City, IA, 52242. Email: weiyu-xu@uiowa.edu.}
%}
\maketitle %\pagestyle{plain}

%%%%%%%%%%%%%%% Article Body %%%%%%%%%%%%%%%%%%%%%%%%%%%%%%%%%%%%%%%%%

%%%%=======================================================
\begin{abstract}
Characterizing the phase transitions of convex optimizations in recovering structured signals or data is of central importance in compressed sensing, machine learning and statistics.
The phase transitions of many convex optimization signal recovery methods such as $\ell_1$ minimization and nuclear norm minimization are well understood through recent years' research.
However, rigorously characterizing the phase transition of total variation (TV) minimization in recovering sparse-gradient signal is still open.
In this paper, we fully characterize the phase transition curve of the TV minimization. Our proof builds on Donoho, Johnstone and Montanari's conjectured phase transition curve
for the TV approximate message passing algorithm (AMP), together with the linkage between the minmax Mean Square Error of a denoising problem and the high-dimensional convex geometry for TV minimization.
\end{abstract}
%\begin{keywords}
%Phase Transition; Total Variation Minimization; Gaussian width.
%\end{keywords}
\section{Introduction}
In the last decade, using convex optimization to recover parsimoniously-modeled signal or data from a limited number of samples has attracted significant research interests in
compressed sensing, machine learning and statistics \cite{candes2006robust,donoho2005neighborliness,candes2009exact,Donoho:TIT:2013}. For example, in compressed sensing, the main idea
is to exploit the sparse structures inherent to the underlying signal, and design sparsity-promoting convex optimization programs, such as $\ell_1$ minimization,  to efficiently recover the signal
from a much smaller number of measurements than the ambient signal dimension.  Numerical results empirically show that these convex optimization based signal recovery algorithms often exhibit
a phase transition phenomenon: when the number of measurements exceeds a certain threshold, the convex optimization can correctly recover the structured signals with high probability; when the
number of measurements is smaller than the  threshold, the convex optimization will fail to recover the underlying structured signals with high probability.
A series of works studying convex geometry for linear inverse problems have made substantial progress in theoretically characterizing the phase transition phenomenon for convex optimizations in
recovering structured signals \cite{donoho2005neighborliness, Mihalo:arxiv:2009,Venkat:arXiv:2012,Tropp:Info:2014,bayati2011dynamics}. For example, the phase transitions
for  $\ell_1$ minimization used in recovering sparse signals and nuclear norm minimization used in recovering low-rank matrix have been  well
understood \cite{donoho2005neighborliness,Mihalo:arxiv:2009,Venkat:arXiv:2012,Tropp:Info:2014,bayati2011dynamics}.

In spite of all this progress, characterizing the phase transition for the total variation minimization used in recovering sparse-gradient signals is still open.  Sparse-gradient signals are signals that are piece-wise constant, and thus have a small number of non-zero gradients. This type of signals arise naturally in applications in signal denoising and in digital image processing \cite{Tibshirani:JRSS:2005, Liu:KDD:2010, Leonid:Phys:1992}. Let $\bx^*\in\mathbb{R}^n$ be a vector representing a one-dimensional piece-wise constant signal, and $\bB\bx^*$ denote the finite difference of $\bx^*$, in which $(\bB\bx^*)_i=\bx^*_{i+1}-\bx^*_{i}$ with $\bx^*_{i}$ being the $i$th element of $\bx^*$. Since $\bx^*$ has sparse gradients, $\bB\bx^*$ has very few non-zero entries. Suppose one observes $\by=\bA\bx^*$, in which $\bA\in \mathbb{R}^{m\times n}$ is the observation matrix, then in the total variation (TV) minimization problems, one tries to recover $\bx^*$ from $\by$ by solving
\begin{eqnarray}
\min\limits_{\bx} &&\|\bB\bx\|_1,\label{eq:TVproblem}\\
\text{s.t.}&& \by=\bA\bx.\nonumber
\end{eqnarray}
Here, $\|\bB\bx\|_1=\sum\limits_{ i=1}^{n-1}(\bB\bx)_i$ is called the total variation semi-norm of $\bx$.% and $A\in \mathbb{R}^{n\times p}$ is the measurement matrix.

TV minimization has a wide range of applications, including image reconstruction and restoration\cite{Cai:JAMS:2012, Sidky:PMB:2008}, medical imaging\cite{Keeling:AMC:2008}, noise removing\cite{Leonid:Phys:1992}, computing surface evolution\cite{Chambolle:IJCV:2009} and profile reconstruction\cite{Berg:IP:1995}. However, the understanding of the performance of TV minimization is less complete than that of other convex optimization based methods such as $\ell_1$ minimization. In particular, the phase transition of the TV minimization has not been fully characterized and remains as an open problem. In this paper, we solve this open problem of fully characterizing the phase transition of the TV regularization. The starting points of our investigation are the results obtained in~\cite{Tropp:Info:2014} and~\cite{Donoho:TIT:2013}, which we discuss in detail in the following.

First, for a general signal recovery problem using general proper convex penalty function $f(\bx)$ given as follows,
\begin{eqnarray}
\min\limits_{\bx} &&f(\bx),\label{eq:GeneralRegularizer}\\
\text{s.t.}&& \by=\bA\bx,\nonumber
\end{eqnarray}
the authors of \cite{Tropp:Info:2014} showed that the phase transition on the number of measurements happens at the Gaussian width of the descent cone of the proper convex penalty function $f(\bx)$.
Using this result and earlier results from polyhedral geometry, researchers have fully characterized the phase transition thresholds for $\ell_1$ minimization and nuclear norm minimization by calculating
the Gaussian width of their decent cones. However, since the total variation semi-norm is a non-separable convex penalty term, calculating the precise Gaussian width of the descent cone of the total variation
semi-norm is difficult and remains open. This difficulty in calculating the Gaussian width also prevents us from characterizing the phase transition of total variation minimization in recovering sparse-gradient signals.

Second, in \cite{Donoho:TIT:2013}, the authors first considered a denoising problem where the total variation regularizer was used to denoise sparse-gradient signals contaminated by additive Gaussian noises, and characterized the minimax MSE of this denoising method.  The authors in \cite{Donoho:TIT:2013} further proposed an approximate message passing algorithm for recovering sparse-gradient measurements from undersampled measurements, and
conjectured that the minimax MSE for the denoising problem was the same as the phase transition (the number of measurements) for the approximate message passing algorithm.  Numerical results in
\cite{Donoho:TIT:2013} demonstrated that the empirical phase transitions for both the AMP algorithm and the total variation minimization~\eqref{eq:TVproblem}  match the minimax MSE for the denoising problem.
However, justifying the conjecture in \cite{Donoho:TIT:2013} requires the assumption that the state evolution for the approximate message passing algorithm is valid, which still remains to be proved.
Furthermore, we do not know whether the AMP and the total variation minimization indeed have the same phase transition. In \cite{Oymak:arxiv:2013}, the authors showed that the minimax MSE of the denoising problem considered in \cite{Donoho:TIT:2013}
is an upper bound on the phase transition (the number of needed measurements) of total variation minimization (as will be discussed later in this paper).
However, it remains unknown whether the minimax MSE of the denoising problem is indeed the phase transition of total variation minimization.

%In this paper, we solve this open problem of fully characterizing the phase transition of the TV regularization. The starting points of our investigation are the results obtained in~\cite{Donoho:TIT:2013} and \cite{Venkat:arXiv:2012}. In particular, among other results, \cite{Donoho:TIT:2013} connected the approximate message passing algorithm for recovering sparse-gradient signals~\eqref{eq:TVproblem} to a minimax denoising problem using TV regularization terms. It was conjectured in \cite{Donoho:TIT:2013} that the minimax MSE of the TV-regularized denoiser gives the phase transition for the approximate message passing algorithm and total variation minimization. Using the tools from convex geometry, \cite{Rudelson_onsparse}, \cite{Mihalo:arxiv:2009}, \cite{Venkat:arXiv:2012} and \cite{Tropp:Info:2014} provided the Gaussian width of the decent cone of a proper convex function $f(\bx)$ as the phase transition for signal recovery by minimizing $f(\bx)$ in \eqref{eq:GeneralRegularizer}. While the Gaussian width tool is applicable for the TV regularizer, it is not clear how to compute the Gaussian width for TV minimization.
As our main contribution in this paper, we rigorously prove that the minimax MSE of TV-regularized denoising considered by~\cite{Donoho:TIT:2013} is indeed the phase transition of the TV minimization problem~\eqref{eq:TVproblem}, by showing the minimax MSE of the denoising problem is approximately equal to the Gaussian width of the descent cone of the TV semi-norm, up to negligible constants.  We remark that, different from the Gaussian width, the minimax MSE of the TV-regularized denoising can be readily computed. We can thus characterize the phase transition of total variation minimization using the minimax MSE of the denoising problem.

Here, we would like to compare our work with~\cite{Jianfeng:Info:2015}. In \cite{Jianfeng:Info:2015}, the authors gave upper and lower bounds on the number of needed measurements for recovering  worst-case  sparse-gradient signals which have a fixed number of nonzero elements in its signal gradient, using the tool of Gaussian width. In contrast, in this paper we will focus on the phase transition for average-case sparse-gradient signals, where the number of nonzero elements in signal gradient grows proportionally with the ambient signal dimension.

The remainder of the paper is organized as follows. In Section \ref{sec:background}, we introduce the background and set up the notations that will be used in later analysis and proofs. In Section \ref{sec:result}, we verify that the TV regularizer satisfies the weak decomposability condition in \cite{Foygel:TIT:2014} and use this condition to fully characterize the phase transition of the TV minimization problem. In Section \ref{sec:conclusion}, we provide several concluding remarks.

\section{Background}\label{sec:background}
%We have observations
%\begin{equation}
%y = Ax^{*},
%\end{equation}
%where $x^{*}\in\mathbb{R}^p$ and $A$ is a $n\times p$ observation matrix. $n$ is the total number of observations and $p$ is the dimension of signal. We want to recover the signal $x^{*}$ from the observations $y$ by total variation(TV) minimization\cite{Leonid:Phys:1992} or fused Lasso\cite{Tibshirani:JRSS:2005}, which is of the form
%\begin{equation}
%\hat{x}=\underset{x\in\mathbb{R}^p}{\arg\min}\frac{1}{2}\|y-x\|_2^2+\tau f(x),\label{eq:TV}
%\end{equation}
%where $\tau\in\mathbb{R}_{+}$, $f(x)$ being the TV regularizer $f(x):=\sum_{i=1}^{p-1}|x_{i+1}-x_i|$ and $\hat{x}$ is an estimator of $x^{*}$.

\subsection{Definitions and Notations}
We first introduce definitions and notations that will be used throughout the paper.

We use $f(\bx)$ to denote the TV regularizer $f(\bx):=\|\bB\bx\|_1$, which is not a norm, and
$\bB\in\mathbb{R}^{(n-1)\times n}$ with
\begin{equation}
\bB_{i,j} =
\begin{cases}
1 &\mbox{if } j=i \\
-1 &\mbox{if } j=i+1 \\
0 &\mbox{otherwise}.
\end{cases}
\end{equation}
Let $\partial f(\bx)$ be the subdifferential of $f$ at $\bx$.

For a given non-empty set $\mathcal{C}\subseteq \mathbb{R}^n$, the cone obtained by $\mathcal{C}$ is defined as
\begin{eqnarray}
\text{cone}(\mathcal{C}):=\{\lambda \bx\in\mathbb{R}^n: \bx \in\mathcal{C},\lambda\geq 0\}.
\end{eqnarray}

The distance from a vector $\bg\in\mathbb{R}^n$ to the set $\mathcal{C}$ is defined as
\begin{eqnarray}
\text{dist}(\bg,\mathcal{C}):=\inf_{\bu\in \mathcal{C}}\|\bg-\bu\|_2,
\end{eqnarray}
in which $\|\cdot\|_2$ is the $\ell_2$ norm.

The mean square distance to $\mathcal{C}$ is defined as
\begin{eqnarray}
D(\mathcal{C}):= \mathbb{E}\{\text{dist}(\bg,\mathcal{C})^2\},
\end{eqnarray}
in which the expectation is taken over $\bg\sim \mathcal{N}(\bz,\bI)$ with $\bI$ being the identity matrix.
%The TV minimization is most often used in digital image processing and has applications in noise removal. In TV minimization problem, it is of interest to identify the phase transition\cite{Tropp:arxiv:2005}. Generally speaking, a phase transition is a sharp change of the results in problems. In TV minimization problem, a phase transition is the change for the number of samples $n$ from failure to success for fully recovering $x^{*}$ from $y$ or $\hat{x}=x^{*}$. In particular, we will like to study the phase-transition phenomena on $n$, the number of observation needed, for $\delta\in\mathbb{R}_{+}$, such that: a) if $n>p\delta$, \eqref{eq:TV} can successfully recover $x^*$, i.e., $\hat{x}=x^*$, with high probability; b) if $n<p\delta$, \eqref{eq:TV} fails to recover $x^*$, i.e., $\hat{x}\neq x^*$, with high probability. The open problem is that the exact value of $\delta$ is unknown for TV minimization.

%\noindent\textbf{Upperbound on $\delta$}

%\vspace{5mm}
%Let $f(x)$ denote the TV regularizer and $\partial f(x)$ be the subdifferential of $f$ at the underlying signal $x$.

Throughout the paper, we will use $[k]:=\{1,2,\cdots,k\}$ where $k$ is a positive integer, $[b,e]:=\{b,b+1,\cdots,e\}$ where $e\geq b$. Similarly, $(b,e):=\{b+1,b+2,\cdots,e-1\}$. Let $\mathcal{S}$ be a subset of $[n-1]$, then $\mathcal{S}^c$ denote the complement of $\mathcal{S}$ with respect to $[n-1]$. We will use $|\mathcal{S}|$ to denote the cardinality of the set $\mathcal{S}$.

Let $\bu\in\mathbb{R}^{n-1}$ be a vector and $\mathcal{S}$ be a subset of the indices set $[n-1]$, then $\bu_{\mathcal{S}}\in\mathbb{R}^{n-1}$ is the vector such that
\begin{equation}
(\bu_{\mathcal{S}})_{i}=
\begin{cases}
\bu_{i}, &\mbox{ if } i\in \mathcal{S}\\
0, &\mbox{ if } i\notin \mathcal{S}.
\end{cases}\label{eq:notation}
\end{equation}

We use $\tilde{\bu}_{\cS}\in\mathbb{R}^{|\cS|}$ to denote the shortened version of $\bu_{\cS}$ by deleting all zeros in $\bu_{\cS}$. To be more explicit, let $\cS=\{s_1,s_2,\cdots,s_{|\cS|}\}$,
\begin{equation}
(\tilde{\bu}_{\cS})_{i}=\bu_{s_i},\text{ }\forall i\in [|\cS|].\label{eq:subnotation}
\end{equation}

Let $\bM\in\mathbb{R}^{(n-1)\times(n-1)}$ be a matrix, and $\cS$ and $\cT$ be subsets of $[n-1]$, then $\bM_{\cS,\cT}\in\mathbb{R}^{|\cS|\times |\cT|}$ is the matrix produced by deleting all rows not in $\cS$ and columns not in $\cT$ from $\bM$. To be explicit, let $\cS=\{s_{1},s_{2},\cdots,s_{|\cS|}\}$ and $\cT=\{t_{1},t_{2},\cdots,t_{|\cT|}\}$,
\begin{equation}
(\bM_{\cS,\cT})_{i,j}=\bM_{s_i,t_j}, \text{ }\forall i\in [|\cS|]\text{ and }\forall j\in [|\cT|].\label{eq:matrixnotation}
\end{equation}
We also write ${\bM}_{\cS,\cT}$ as ${\bM}_{\cS,\Omega}$ if $\cT=[n-1]$. Similarly, if $\cS=[n-1]$, we write ${\bM}_{\cS,\cT}$ as ${\bM}_{\Omega, \cT}$.

\subsection{Phase Transition for the AMP~\cite{Donoho:TIT:2013}}

In \cite{Donoho:TIT:2013}, to recover sparse-gradient signals from undersampled measurements, the authors proposed an iterative approximate message passing algorithm, called TV-AMP algorithm,
which uses the TV denoisers in each iteration.  The authors further connected the TV-AMP algorithm with the minimax denoising problem.  In the denoising problem, one observes $\by=\bx^*+\mathbf{z}$, in which $\mathbf{z}$ is the noise vector with i.i.d. standard Gaussian random variables with unit variance,  and tries to recover $\bx^*$ from the noisy observation $\by$. In particular, \cite{Donoho:TIT:2013} conjectured that the minimax MSE of the denoising problem will correctly predict the phase transition of the TV-AMP algorithm.  Moreover, it is observed that the minimax MSE of the denoising problem matches the empirical phase transition of ~\eqref{eq:TVproblem}, and the empirical phase transition of the AMP algorithm.
Let $m_{MAP}$ be the number of observation needed for the AMP algorithm. \cite{Donoho:TIT:2013} numerically showed that, as soon as $m_{MAP}\geq nM_{\text{denoiser}}$, the AMP algorithm will be
successful in recovering $\bx^*$ with a high probability. Here $M_{\text{denoiser}}$ is the per-coordinate minimax mean squared error of the denoising problem when one observes $\by=\bx^*+\mathbf{z}$ and
uses the TV-penalized least-square denoisers.  However, in \cite{Donoho:TIT:2013}, the analytically derived phase transition for the AMP algorithms depends on the assumption of the AMP state evolution being correct.
However, proving that the assumption holds true remains open for the TV-AMP. Moreover, it is unknown whether the phase transition of the AMP algorithm theoretically matches the phase transition of the TV minimization ~\eqref{eq:TVproblem}.
Thus characterizing the phase transition for the TV minimization remains open, even though we have a phase transition formula from \cite{Donoho:TIT:2013} matching the empirical performance of TV minimization .

In another line of work using convex geometry, \cite{Oymak:arxiv:2013} showed that the minimax MSE $M_{\text{denoiser}}$ is closely related to $\min_{\lambda\geq 0}D(\lambda\partial f(\bx))$,
where $\partial f(\bx)$ is the subdifferential of $f(\bx)$ at the underlying signal $\bx$.
In particular,~\cite{Oymak:arxiv:2013} showed that $nM_{\text{denoiser}}\approx \min\limits_{\lambda\geq 0}D(\lambda\partial f(\bx))$.
However, it is still unknown whether $\min_{\lambda\geq 0}D(\lambda\partial f(\bx))$ provides the phase transition for the AMP or the TV minimization ~\eqref{eq:TVproblem}.

\subsection{Phase Transition Based on Gaussian Width Calculation~\cite{Venkat:arXiv:2012}}

Using the ``escape through the mesh''  lemma,  recent works \cite{Rudelson_onsparse, Mihalo:arxiv:2009, Venkat:arXiv:2012, Tropp:Info:2014} have shown that,
for a proper convex function $f(\cdot)$,  $D(\text{cone}(\partial f(\bx_0)))$ (where $\bx_0$ is the original signal) is the phase transition threshold on the number of needed Gaussian measurements
for the optimization problem
~\eqref{eq:GeneralRegularizer} to recover $\bx_0$. As discussed above, while this formula $D(\text{cone}(\partial f(\bx_0)))$ is applicable for the TV minimization problem, it is not clear how to compute it for the TV
semi-norm function $f(\bx)$, which is a non-separable function. This is in contrast to  the Gaussian width calcaulations for separable penalty functions such as $\ell_1$ norms.

\subsection{Central Issue and Our Approach}
At this point, it is not known whether $\min_{\lambda\geq 0}D(\lambda\partial f(\bx))\approx D(\text{cone}(\partial f(\bx)))$ or not for the TV regularizer $f(\bx)$.
Thus it is not clear whether the minmax MSE result derived in \cite{Donoho:TIT:2013} will directly give the phase transition of the TV minimization.
In fact,  when $f(\bx)$ represents a norm of $\bx$, it is known that $\min_{\lambda\geq 0}D(\lambda\partial f(\bx))\approx D(\text{cone}(\partial f(\bx)))$ \cite{Tropp:Info:2014}.
One may thus wonder whether we can show this equality to hold for the TV regularizer by directly applying (3.5) in \cite{Oymak:arxiv:2013} or (4.3) in \cite{Tropp:Info:2014}.
However, there are two obstacles for directly applying those two equations. First, the TV regularizer $f(\bx)$ is not a norm but a semi-norm instead.
Secondly, even if we go ahead with applying (3.5) in \cite{Oymak:arxiv:2013} or (4.3) in \cite{Tropp:Info:2014} to bound the Gaussian width of the descent cone of the function $f(\bx)$, the approximation error is too big, since $1/f(\bx/\|\bx\|_2)$
can be arbitrarily big for an $n$-dimensional signal $\bx$, when $f(\bx)$ is the total variation semi-norm.

In this paper, we will show that
\begin{eqnarray}
\min\limits_{\lambda\geq 0}D(\lambda\partial f(\bx))\approx D(\text{cone}(\partial f(\bx))),\label{eq:key}
\end{eqnarray}
for the TV regularizer, and $\min\limits_{\lambda\geq 0}D(\lambda\partial f(\bx))$ is indeed the phase transition of the TV regularizer.

In order to show~\eqref{eq:key}, we instead build on Proposition 1 of \cite{Foygel:TIT:2014}. In particular, we show that $f(\bx)$ satisfies the weak decomposability condition defined in \cite{Foygel:TIT:2014}, and
hence we can use Proposition 1 of \cite{Foygel:TIT:2014} to obtain:
\begin{eqnarray}
\min\limits_{\lambda\geq 0}D(\lambda\partial f(\bx))\leq D(\text{cone}(\partial f(\bx)))+6,
\end{eqnarray}
which coupled with the fact that $$ \min\limits_{\lambda\geq 0}D(\lambda\partial f(\bx))\geq D(\text{cone}(\partial f(\bx)))$$ proves~\eqref{eq:key}.

\section{Main Result}\label{sec:result}
In this section, we prove that $\min\limits_{\lambda\geq 0}D(\lambda\partial f(\bx))$ is the phase transition of~\eqref{eq:TVproblem} by showing that~\eqref{eq:key} holds.
%In this section, we verify the weak decomposability condition stated in Proposition 1 of \cite{Foygel:TIT:2014}.
For any given nonzero vector $\bx\in\mathbb{R}^n$, define $\bv\in\mathbb{R}^{n-1}$ with
\begin{equation}\label{eq:2}
\bv_i=
\begin{cases}
1 &\mbox{if } \bx_{i+1}<\bx_i \\
-1 &\mbox{if } \bx_{i+1}>\bx_i \\
\in [-1,1] &\mbox{if } \bx_{i+1}=\bx_i.
\end{cases}
\end{equation}
Let $\mathcal{V}$ denote the set of $\bv$'s that satisfy~\eqref{eq:2}, then $\partial f(\bx)$ can be written as %$v\in\mathcal{V}$ if and only if $v$ satisfies \eqref{eq:2}.
\begin{equation}
\partial f(\bx) = \{\bB^{T}\bv:\bv\in\mathcal{V}\}.\label{eq:partialf}
\end{equation}

%Here we restate the weak decomposability for completeness.
\begin{define}
For $\bx\neq \bz$, the set $\partial f(\bx)$ is said to satisfy the \emph{weak decomposability assumption} if there exists $\bw_0\in \partial f(\bx)$ such that
\begin{equation}\label{eq:1}
\dpro{\bw-\bw_0,\bw_0}=0,
\end{equation}
simultaneously for all $ \bw\in\partial f(\bx)$.
\end{define}

Using~\eqref{eq:partialf}, we can rewrite \eqref{eq:1} as
\begin{equation}\label{eq:3}
\exists \bv_0 \in \mathcal{V}\text{ s.t. } \dpro{\bB^{T}\bv-\bB^{T}\bv_0,\bB^{T}\bv_0}=0\text{ },\forall \bv\in\mathcal{V}.
\end{equation}

We have the following result regarding the weak decomposability of $\partial f(\bx)$.
\begin{lem}\label{lem:decompose}
For any given nonzero $\bx\in\mathbb{R}^n$, $\partial f(\bx)$ satisfies the weak decomposability assumption.
\end{lem}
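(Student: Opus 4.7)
The weak decomposability condition, via $\bw = \bB^{T}\bv$ and $\bw_0 = \bB^{T}\bv_0$, is equivalent to finding $\bv_0 \in \mathcal{V}$ with $(\bv - \bv_0)^{T}\bH\bv_0 = 0$ for every $\bv \in \mathcal{V}$, where $\bH := \bB\bB^{T} \in \mathbb{R}^{(n-1)\times (n-1)}$ is the tridiagonal matrix with $2$'s on the main diagonal and $-1$'s on the adjacent diagonals. The plan is to construct such a $\bv_0$ explicitly and verify this identity by a simple structural argument.

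First, partition $[n-1] = \cS \cup \cS^c$, where $\cS := \{i \in [n-1] : \bx_{i+1} \neq \bx_i\}$ is the ``active'' set and $\cS^c$ is the ``flat'' set. Every $\bv \in \mathcal{V}$ has its entries on $\cS$ rigidly fixed to $\pm 1$ by \eqref{eq:2}, while its entries on $\cS^c$ can be any values in $[-1,1]$. Consequently, for any candidate $\bv_0 \in \mathcal{V}$ and any $\bv \in \mathcal{V}$, the difference $\bv - \bv_0$ vanishes on $\cS$, and the desired identity reduces to $\sum_{i \in \cS^c}(\bv_i - (\bv_0)_i)(\bH\bv_0)_i = 0$ for all $\bv \in \mathcal{V}$. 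Probing this with $\bv_i = \pm 1$ at a single $i \in \cS^c$ shows it is equivalent to the linear constraint $(\bH\bv_0)_i = 0$ for every $i \in \cS^c$.

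I would then define $\bv_0$ as follows: on $\cS$, set $(\bv_0)_i$ equal to the sign prescribed by \eqref{eq:2}; on $\cS^c$, solve the linear system $(\bH\bv_0)_{\cS^c} = 0$ using the already-fixed entries on $\cS$ as boundary data. Decomposing $\cS^c$ into its maximal consecutive runs, this is a discrete Poisson problem on each run, with Dirichlet values inherited from the adjacent signs on $\cS$ (or implicit zero boundary values at the endpoints $1$ and $n-1$, since the first and last rows of $\bH$ each have only two nonzero entries). Each local coefficient matrix is a symmetric positive definite tridiagonal block, so the solution exists and is unique.

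The only remaining step is to check that $(\bv_0)_{\cS^c} \in [-1,1]^{|\cS^c|}$, ensuring $\bv_0 \in \mathcal{V}$. On each maximal run the restriction of $\bv_0$ is discrete-harmonic, hence affine in the index, interpolating between boundary values of magnitude at most one, so it stays in $[-1,1]$ by the discrete maximum principle. With this $\bv_0$ in hand, $(\bH\bv_0)_{\cS^c} = 0$ holds by construction and the reduction above yields $(\bv - \bv_0)^{T}\bH\bv_0 = 0$ for every $\bv \in \mathcal{V}$, which is the weak decomposability condition. I expect the maximum-principle bound to be the only nontrivial ingredient; everything else is linear algebra driven by the tridiagonal structure of $\bH = \bB\bB^{T}$.
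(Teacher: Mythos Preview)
Your proposal is correct and follows essentially the same approach as the paper: reduce weak decomposability to the requirement $(\bB\bB^{T}\bv_0)_i=0$ on the free indices, solve the resulting block-tridiagonal system with the fixed $\pm 1$ entries as boundary data, and verify the solution stays in $[-1,1]$. The only difference is cosmetic---your $\cS$ and $\cS^c$ are swapped relative to the paper's---and your feasibility check via the discrete maximum principle (the solution on each run is affine, hence bounded by its endpoint values, including the phantom zero at indices $0$ and $n$) is a cleaner rendering of the paper's explicit computation of $((\bB\bB^{T})_{\cS,\cS})^{-1}(\bB\bB^{T})_{\cS,\cS^c}$, which produces exactly those linear-interpolation coefficients.
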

\begin{proof}
To check the decomposability assumption, we need to check whether we can always find a $\bv_0 \in \mathcal{V}$ that satisfies \eqref{eq:3}.

It is easy to check that $\bB\bB^T$ is symmetric, and hence~\eqref{eq:3} is equivalent to
\begin{eqnarray}
%\dpro{B^{T}v-B^{T}v_0,B^{T}v_0}&=&0,\text{ },\forall v\in\mathbb{V},\no\\
%\bv_0^{T}\bB\bB^{T}(\bv-\bv_0)&=&0,\quad\forall \bv\in\mathcal{V},\no\\
\exists \bv_0 \in \mathcal{V}\text{ s.t. }\bv_0^{T}\bB\bB^{T}\bv&=&\bv_0^{T}\bB\bB^{T}\bv_0,\quad\forall \bv\in\mathcal{V}.\label{eq:4}
\end{eqnarray}
\eqref{eq:4} indicates that \eqref{eq:3} is satisfied if and only if we can find a $\bv_0\in\mathcal{V}$ such that $\bv_0^{T}\bB\bB^{T}\bv$ is a constant for all $\bv\in\mathcal{V}$.

Define the set of indices $\cS:=\{i\in[n-1]: \bx_{i}=\bx_{i+1}\}$.
If $\cS=\emptyset$,~\eqref{eq:4} holds trivially, as in this case $\|\bB\bx\|_1$ is differentiable and $\mathcal{V}$ is a singular set. In the following we focus on the case that $\cS\neq\emptyset$.

When $\cS\neq\emptyset$, $\cS$ can be written as a union of consecutive groups of indices that $\cS=\cup_{i=1}^{K+1}[b_i,e_i]$, where $K+1$ is the number of intervals in which the elements in $\bx$ have the same value, $b_i\leq e_i$, $\forall i\in[K+1]$ and $b_{i+1}-e_{i}>1$, $\forall i\in[K]$. $\cS$ can also be expressed explicitly as $\cS = \{\cS_1, \cS_2, \cdots, \cS_{|\cS|}\}$ with elements increasing. We can define $\cS^c$ and $\cS^c$ that have increasing elements in a similar manner.

Using the notation introduced in~\eqref{eq:notation}, we can write $\bv=\bv_{\cS}+\bv_{\cS^c}$, and hence
\begin{equation}\label{eq:5}
\bv_0^{T}\bB\bB^{T}\bv = \bv_0^{T}\bB\bB^{T}\bv_{\cS}+\bv_0^{T}\bB\bB^{T}\bv_{\cS^c}.
\end{equation}
Notice that
\begin{equation}
(\bv_{\cS^c})_i =
\begin{cases}
0, &\mbox{ if }\bx_{i+1}=\bx_i,\\
1, &\mbox{ if } \bx_{i+1}<\bx_i,\\
-1, &\mbox{ if }\bx_{i+1}>\bx_i,\\
\end{cases}\label{eq:sc}
\end{equation}
where $i\in[n-1]$. Given $\bx$, $\bv_{\cS^c}$ is fixed and hence $\bv_0^{T}\bB\bB^{T}\bv_{\cS^c}$ is fixed.
Since $(\bv_{\cS})_i$ can be any real number in $[-1,1]$ for $i\in \cS$, a necessary and sufficient condition for the right hand side of \eqref{eq:5} to be a constant is
\begin{eqnarray}
\bv_0^{T}\bB\bB^{T}\bv_{\cS} &=& 0,\no
\end{eqnarray}
which can be seen by setting $\bv_{\cS}=\bz$. Using notations introduced in~\eqref{eq:subnotation} and~\eqref{eq:matrixnotation}, the equation above can be written as
\begin{eqnarray}
&&\bv_0^{T}(\bB\bB^{T})_{\Omega,\cS}\tilde{\bv}_{\cS} = 0,\quad\forall \bv\in\mathcal{V}\no\\
\Leftrightarrow && \bv_0^{T}(\bB\bB^{T})_{\Omega,\cS}=\bz,\no\\
\Leftrightarrow &&(\bB\bB^{T})_{\cS,\Omega}\bv_0=\bz,\no\\
\Leftrightarrow &&(\bB\bB^{T})_{\cS,\Omega}(\bv_{0})_{\cS}=-(\bB\bB^{T})_{\cS,\Omega}(\bv_{0})_{\cS^c},\no\\
\Leftrightarrow &&(\bB\bB^{T})_{\cS,\cS}(\tilde{\bv}_{0})_{\cS}=-(\bB\bB^{T})_{\cS,\Omega}(\bv_{0})_{\cS^c}.\label{eq:solution}
\end{eqnarray}
If $(\bB\bB^{T})_{\cS,\cS}$ is invertible, from~\eqref{eq:solution}, we obtain
\begin{eqnarray}
(\tilde{\bv}_{0})_{\cS}&=&-((\bB\bB^{T})_{\cS,\cS})^{-1}(\bB\bB^{T})_{\cS,\Omega}(\bv_{0})_{\cS^c}\no\\
&=&-((\bB\bB^{T})_{\cS,\cS})^{-1}(\bB\bB^{T})_{\cS,\cS^c}(\tilde{\bv}_{0})_{\cS^c}.\label{eq:6}
\end{eqnarray}
 Hence, if the answers to the following two questions are both yes:
\begin{enumerate}
\item Is $(\bB\bB^{T})_{\cS,\cS}$ invertible?
\item Is $(\tilde{\bv}_{0})_{\cS}$ produced by \eqref{eq:6} feasible? Or equivalently, does each element of $(\tilde{\bv}_{0})_{\cS}$ fall into the interval $[-1,1]$?
\end{enumerate}
then, combining~\eqref{eq:6} with $(\bv_{0})_{\cS^c}$ in~\eqref{eq:sc}, we find a feasible $\bv_0$ that satisfies the weak decomposability assumption.

To answer the first question, we need to study the structure of $(\bB\bB^{T})_{\cS,\cS}$.
Note that $(\bB\bB^{T})^{-1}$ is symmetric as $\bB\bB^{T}$ shown in \eqref{ddt} is symmetric, and that $(\bB\bB^{T})^{-1}\in\mathbb{R}^{(n-1)\times(n-1)}$. Define $\floor{\cdot}$ and $\ceil{\cdot}$ to be the floor and ceiling operator respectively. Here we give the exact form for $\bB\bB^{T}$ and $(\bB\bB^{T})^{-1}$. For $(\bB\bB^{T})^{-1}$ we only give the upper triangular in \eqref{equ:ddt-1} due to its symmetry.
\begin{figure*}[!t]
% ensure that we have normalsize text
\normalsize
% Store the current equation number.
%\setcounter{MYtempeqncnt}{\value{equation}}
% Set the equation number to one less than the one
% desired for the first equation here.
% The value here will have to changed if equations
% are added or removed prior to the place these
% equations are referenced in the main text.
%\setcounter{equation}{5}
\begin{equation}\label{ddt}
\bB\bB^{T}=\left(
         \begin{array}{ccccccc}
           2 & -1&   &   &   &   &   \\
           -1& 2 & -1&   &   &   &   \\
             & -1& 2 & \ddots &   &   &   \\
             &   & \ddots  &  \ddots &  \ddots &   &   \\
             &   &   & \ddots  &  \ddots &  -1 &   \\
             &   &   &   & -1& 2 & -1\\
             &   &   &   &   & -1& 2 \\
         \end{array}
       \right).\\
\end{equation}
\begin{equation}\label{equ:ddt-1}
(\bB\bB^{T})^{-1}=\frac{1}{n}\left(
         \begin{array}{ccccccc}
           (n-1) & (n-2)& \cdots  &  \ceil{\frac{n}{2}} & \cdots  & 2  & 1  \\
             & 2(n-2) & &  2\ceil{\frac{n}{2}} &   &  2\times2 &  2 \\
                 &        &  \cdots  & &   &   &  \cdots \\
             &   &   & \floor{\frac{n}{2}}\ceil{\frac{n}{2}}  &   &   & \floor{\frac{n}{2}}  \\
             &   &   &   & \cdots &   &  \cdots \\
             &   &   &   & & 2(n-2) & (n-2)\\
             &   &   &   &   & & (n-1)\\
         \end{array}
       \right).
\end{equation}
\begin{equation}\label{ddtgimel}
(\bB\bB^{T})_{\cS,\cS}=\left(
         \begin{array}{ccccccc}
           2 & -\delta_{\cS_2-\cS_1=1}&   &   &   &   &   \\
           -\delta_{\cS_2-\cS_1=1}& 2 & -\delta_{\cS_3-\cS_2=1}&   &   &   &   \\
             & -\delta_{\cS_3-\cS_2=1}& 2 & \ddots &   &   &   \\
             &   & \ddots  &  \ddots & -\delta_{\cS_{K+1}-\cS_{K}=1}  &   &   \\
             &   &   &  -\delta_{\cS_{K+1}-\cS_{K}=1} &  2 & &  \\
         \end{array}
       \right).
\end{equation}
% Restore the current equation number.
%\setcounter{equation}{\value{MYtempeqncnt}}
% IEEE uses as a separator
\hrulefill
% The spacer can be tweaked to stop underfull vboxes.
\vspace*{4pt}
\end{figure*}
\begin{eqnarray}
(\bB\bB^{T})_{i,j}&=&
\begin{cases}
2, &\mbox{ if } i=j,\\
-1, &\mbox{ if }|i-j|=1,\\
0, &\mbox{ otherwise}.
\end{cases}\\
\left((\bB\bB^{T})^{-1}\right)_{i,j}&=&
\begin{cases}
\frac{i(n-j)}{n}, &\mbox{ if } i\leq j,\\
\frac{j(n-i)}{n}, &\mbox{ if } i>j.
\end{cases}
\end{eqnarray}
From \eqref{ddtgimel}, we have
\begin{eqnarray}
&&(\bB\bB^{T})_{\cS,\cS,i,j}\no\\
&=&\sum_{p=1}^n (\bB_{\cS,})_{i,p}(\bB_{\cS,})_{j,p}\no\\
&=&
\begin{cases}
2, &\mbox{ if } i=j,\\
-\delta_{|\cS_i-\cS_j|=1}, &\mbox{ if }|i-j|=1,\\
0, &\mbox{ otherwise},
\end{cases}
\end{eqnarray}
in which $\delta_{|\cS_i-\cS_j|=1}$ is the indicator function.

Recall that $\cS=\cup_{i=1}^{K+1}[b_i,e_i]$, where $b_i\leq e_i$, $\forall i\in[K+1]$ and $b_{i+1}-e_{i}>1$, $\forall i\in[K]$. Let $I_i:=e_i-b_i+1$ denote the length of $i$th group. Hence $|\cS|=\sum_{i=1}^{K+1}I_i$.
For a positive integer $l$, define matrix $\bH(l)\in\mathbb{R}^{l\times l}$
\begin{eqnarray}
\left((\bH(l)\right)_{i,j}&=&
\begin{cases}
2, &\mbox{ if } i=j,\\
-1, &\mbox{ if }|i-j|=1,\\
0, &\mbox{ otherwise}.
\end{cases}
\end{eqnarray}
So $(\bB\bB^{T})_{\cS,\cS}$ can be expressed as
\begin{eqnarray}
&&(\bB\bB^{T})_{\cS,\cS} \no\\
&=& \left(
               \begin{array}{cccc}
                 \bH(I_1) &  &  &  \\
                  & \bH(I_2) &  &  \\
                  &  & \ddots &  \\
                  &  &  & \bH(I_{K+1}) \\
               \end{array}
             \right),\no
\end{eqnarray}
and
\begin{eqnarray}
&&((\bB\bB^{T})_{\cS,\cS})^{-1} \no\\
&=& \left(
               \begin{array}{cccc}
                 \bH(I_1)^{-1} &  &  &  \\
                  & \bH(I_2)^{-1} &  &  \\
                  &  & \ddots &  \\
                  &  &  & \bH(I_{K+1})^{-1} \\
               \end{array}
             \right),\no\\\label{eq:7}
\end{eqnarray}
where
\begin{eqnarray}
\left((\bH(l))^{-1}\right)_{i,j}&=&
\begin{cases}
\frac{i(l-j)}{l}, &\mbox{ if } i\leq j,\\
\frac{j(l-i)}{l}, &\mbox{ if } i>j.
\end{cases}\label{eq:8}
\end{eqnarray}

\eqref{eq:7} implies that the answer to the first question is yes. Now, we investigate the second question. For that, we first study the structure of $(\bB\bB^{T})_{\cS,\cS^c}$.
Note that $\cS\cap\cS^c=\emptyset$,
\begin{equation}
\left((\bB\bB^{T})_{\cS,\cS^c}\right)_{i,j}=
\begin{cases}
-1, &\mbox{ if }|\cS_i-(\cS^c)_j|=1,\\
0, &\mbox{ otherwise}.
\end{cases}\label{eq:9}
\end{equation}
Notice that $((\bB\bB^{T})_{\cS,\cS})^{-1}$ is in block form, we can also divide $(\bB\bB^{T})_{\cS,\cS^c}$ into blocks corresponding to $((\bB\bB^{T})_{\cS,\cS})^{-1}$.
\begin{equation}
(\bB\bB^{T})_{\cS,\cS^c}
=\left(
  \begin{array}{c}
    (\bB\bB^{T})_{\cS,\cS^c}^{(I_1)} \\
    \cdots \\
    (\bB\bB^{T})_{\cS,\cS^c}^{(I_{K+1})} \\
  \end{array}
\right),
\end{equation}
where $(\bB\bB^{T})_{\cS,\cS^c}^{(I_i)}\in\mathbb{R}^{I_i\times |\cS^c|}$ denote the $i$th block.

Now we have
\begin{equation}
((\bB\bB^{T})_{\cS,\cS})^{-1}(\bB\bB^{T})_{\cS,\cS^c}
=\left(
  \begin{array}{c}
    \bH(I_1)^{-1}(\bB\bB^{T})_{\cS,\cS^c}^{(I_1)} \\
    \cdots \\
    \bH(I_{K+1})^{-1}(\bB\bB^{T})_{\cS,\cS^c}^{(I_{K+1})} \\
  \end{array}
\right).
\end{equation}

Next, we conduct a more close analysis of $(\bB\bB^{T})_{\cS,\cS^c}^{(I_i)}$, $\forall i\in[K+1]$.
Note that the interval with length $I_i$ corresponds to indices $[b_i,e_i]$ of $\bx$, due to condition in \eqref{eq:8}, $-1$ can only appear at position $(j,l)$ when $\cS_j= b_i$ and $\cS^c_l= b_i\pm 1$, or when $\cS_j= e_i$ and $\cS^c_l= e_i\pm 1$. Now, consider two cases:

\emph{Case 1:} If $b_i=e_i$, then $b_i+1=e_i+1$ and $b_i-1=e_i-1$. So $-1$ can only appear at most two positions, the resulting row vector $\bH(I_i)^{-1}(\bB\bB^{T})_{\cS,\cS^c}^{(I_i)}$ has at most two nonzero elements which are equal to $-\frac{1}{2}$ due to $\bH(I_{i})^{-1}=\frac{1}{2}$.

\emph{Case 2:} If $b_i\neq e_i$, then $b_i+1\in[b_i,e_i]\notin \cS^c$ and $e_i-1\in[b_i,e_i]\notin \cS^c$. So $-1$ can only appear at most two positions, which we know must lie in the first row and last row of $(\bB\bB^{T})_{\cS,\cS^c}^{(I_i)}$ respectively, since the points in $(b_i,e_i)$ have no points in $\cS^c$. The first element and last element in each row, say $l$, of $\bH(I_{i})^{-1}$ are $-\frac{I_i-l}{I_i}$ and $-\frac{l}{I_i}$ from \eqref{eq:9}. Hence each row $l$ in the result matrix $\bH(I_i)^{-1}(\bB\bB^{T})_{\cS,\cS^c}^{(I_i)}$ has at most two nonzero elements which are $-\frac{I_i-l}{I_i}$ and $-\frac{l}{I_i}$. Note that $\frac{I_i-l}{I_i}+\frac{l}{I_i}=1$ and $-1\leq\frac{I_i-l}{I_i},\frac{l}{I_i}\leq 1$.

Combining these two cases, we know that each row in $((\bB\bB^{T})_{\cS,\cS})^{-1}(\bB\bB^{T})_{\cS,\cS^c}$ has at most two nonzero elements which falls between $[-1,1]$ and whose sum is $-1$. Since each element in $(\tilde{\bv}_{0})_{\cS^c}$ falls into $[-1,1]$, the resulting $(\tilde{\bv}_{0})_{\cS}$ is always feasible. This implies that the answer to the second question is also yes.

As the result, we find a $\bv_0$, by combining~\eqref{eq:sc} and~\eqref{eq:6}, that satisfies the weak decomposability. The proof of the lemma is complete.
\end{proof}
%\section{Proof for $\min_{\lambda\geq 0}D(\lambda\partial f(x))\approx D(cone(\partial f(x)))$}\label{sec:proof}

With Lemma~\ref{lem:decompose}, we are ready to state the main result.
\begin{thm}
The phase transition of the TV minimization problem is $\min_{\lambda\geq 0}D(\lambda\partial f(\bx))$.
\end{thm}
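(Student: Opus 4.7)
The plan is to chain together three ingredients already in the literature, with Lemma~\ref{lem:decompose} providing the missing piece that makes the chain applicable to the TV semi-norm. The endpoint we want is that $\min_{\lambda\geq 0} D(\lambda \partial f(\bx))$ equals the Gaussian-width-type quantity $D(\text{cone}(\partial f(\bx)))$ up to additive constants, since the latter is already known to be the phase transition threshold via the escape-through-the-mesh result in \cite{Tropp:Info:2014,Venkat:arXiv:2012}.

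First, I would invoke Proposition~1 of \cite{Foygel:TIT:2014}, which says that whenever $\partial f(\bx)$ satisfies the weak decomposability assumption, one has
\begin{equation}
\min_{\lambda\geq 0} D(\lambda \partial f(\bx)) \leq D(\text{cone}(\partial f(\bx))) + 6.\nonumber
\end{equation}
Lemma~\ref{lem:decompose} supplies exactly the hypothesis needed, so this upper bound applies to the TV regularizer. Next, because $\lambda \partial f(\bx) \subseteq \text{cone}(\partial f(\bx))$ for every $\lambda\geq 0$, the projection-based definition of $D(\cdot)$ immediately gives the matching lower bound
\begin{equation}
\min_{\lambda\geq 0} D(\lambda \partial f(\bx)) \geq D(\text{cone}(\partial f(\bx))),\nonumber
\end{equation}
which is already noted in the excerpt. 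Together these yield the approximate equality~\eqref{eq:key} up to an additive constant of at most $6$, which is negligible relative to the order of $n$ at which the phase transition occurs.

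Finally, I would close the argument by citing the phase transition characterization for general proper convex penalty functions from \cite{Tropp:Info:2014} (using the escape-through-the-mesh framework of \cite{Rudelson_onsparse,Mihalo:arxiv:2009,Venkat:arXiv:2012}): the quantity $D(\text{cone}(\partial f(\bx)))$ is precisely the phase transition threshold on the number of Gaussian measurements for the recovery program~\eqref{eq:GeneralRegularizer}, and this result applies directly to~\eqref{eq:TVproblem} since it requires only that $f$ be a proper convex function (not necessarily a norm). Substituting the approximate equality derived above, we conclude that $\min_{\lambda\geq 0} D(\lambda\partial f(\bx))$ characterizes the phase transition of TV minimization, completing the proof.

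The main conceptual obstacle has already been dispatched in Lemma~\ref{lem:decompose}, namely verifying weak decomposability for the non-separable TV semi-norm through the explicit block-diagonal structure of $(\bB\bB^T)_{\cS,\cS}$ and the feasibility check on $(\tilde{\bv}_0)_\cS$. What remains here is essentially bookkeeping: ensuring that the two cited results (Proposition~1 of \cite{Foygel:TIT:2014} and the Gaussian-width phase transition of \cite{Tropp:Info:2014}) are both stated for the semi-norm setting we are in, which they are, since neither requires the penalty to be a norm.
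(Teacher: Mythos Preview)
Your proposal is correct and follows essentially the same route as the paper: invoke Lemma~\ref{lem:decompose} to feed Proposition~1 of \cite{Foygel:TIT:2014}, combine the resulting upper bound with the trivial containment lower bound to sandwich $\min_{\lambda\geq 0}D(\lambda\partial f(\bx))$ within an additive $6$ of $D(\text{cone}(\partial f(\bx)))$, and then appeal to the Gaussian-width phase transition result of \cite{Tropp:Info:2014}. The only minor addition in the paper is an explicit citation to \cite{Donoho:TIT:2013} for the fact that $\min_{\lambda\geq 0}D(\lambda\partial f(\bx))$ scales with $n$, which justifies calling the constant $6$ negligible.
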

\begin{proof}
We will use Proposition 1 in \cite{Foygel:TIT:2014}, which also applies to any other convex complexity measure.
As Lemma 1 shows that $\partial f(\bx)$ satisfies the weak decomposability, using Proposition 1 in \cite{Foygel:TIT:2014}, we have
\begin{eqnarray}
\min_{\lambda\geq 0}D(\lambda\partial f(\bx))&=&\min_{\lambda\geq 0}\mathbb{E}\left\{\inf_{\bu\in\lambda\partial f(\bx)}\|\bg-\bu\|_2^2\right\}\no\\
&\leq&\mathbb{E}\left\{\min_{\lambda\geq 0}\inf_{\bu\in \lambda\partial f(\bx)}\|\bg-\bu\|_2^2\right\}+6\no\\
&=&D(\text{cone}(\partial f(\bx)))+6.\label{eq:16}
\end{eqnarray}
We also have
\begin{eqnarray}
\min_{\lambda\geq 0}D(\lambda\partial f(\bx))&=&\min_{\lambda\geq 0}\mathbb{E}\left\{\inf_{u\in\lambda\partial f(\bx)}\|\bg-\bu\|_2^2\right\}\no\\
&\geq&\mathbb{E}\left\{\min_{\lambda\geq 0}\inf_{\bu\in \lambda\partial f(\bx)}\|\bg-\bu\|_2^2\right\}\no\\
&=&D(\text{cone}(\partial f(\bx))).\label{eq:17}
\end{eqnarray}
Combining \eqref{eq:16} and \eqref{eq:17}, we have
\begin{equation}
D(\text{cone}(\partial f(\bx)))\leq\min_{\lambda\geq 0}D(\lambda\partial f(\bx))\leq D(\text{cone}(\partial f(\bx)))+6.
\end{equation}
Since $\min_{\lambda\geq 0}D(\lambda\partial f(\bx))$ grows proportionally with $n$ when the sparsity of the gradient grow proportionally with $n$, as shown in \cite{Donoho:TIT:2013},
the approximation error $6$ is negeligible. Thus we complete our proof.
\end{proof}
\section{Conclusion}\label{sec:conclusion}
We have verified that the TV regularizer satisfies the weak decomposability condition. We have proved $\min_{\lambda\geq 0}D(\lambda\partial f(\bx))\approx D(\text{cone}(\partial f(\bx)))$ for the TV regularizer $f(\bx)$. Thus the minmax MSE result derived in Donoho's paper\cite{Donoho:TIT:2013} directly gives the phase transition of the total variation minimization.
\section*{Acknowledgement}
We thank Professor Joel Tropp for helpful discussions on the approximation on Gaussian width when Bingwen Zhang and Weiyu Xu visited IMA at University of Minnesota. 
\bibliographystyle{IEEEtran}
\bibliography{macros,messagePassing}
\end{document}